\def\be{\begin{equation}}
	\def\ee{\end{equation}}
\def\ba{\begin{array}}
	\def\ea{\end{array}}

\def\mathbi#1{\text{\em #1}}

\documentclass[prl,showpacs,twocolumn,amsmath]{revtex4}
\usepackage{amsmath}
\usepackage{amsfonts}
\usepackage{mathrsfs}
\usepackage{amssymb}
\usepackage{pifont}
\usepackage{epsfig,subfigure,dsfont,amsthm,amsbsy,mathrsfs,amscd}
\usepackage{epstopdf}
\usepackage{bbm}
\usepackage{color}
\def\qed{\leavevmode\unskip\penalty9999 \hbox{}\nobreak\hfil
	\quad\hbox{\leavevmode  \hbox to.77778em{%
			\hfil\vrule   \vbox to.675em%
			{\hrule width.6em\vfil\hrule}\vrule\hfil}}
	\par\vskip3pt}
\usepackage{leftidx}
\newtheorem{theorem}{Theorem}

\input amssym.def
\begin{document}
	\title{\large\bf Quantifying quantum-state texture}
	\author{Yiding Wang, Hui Liu, and Tinggui Zhang$^{\dag}$ }
	\affiliation{ School of Mathematics and Statistics, Hainan Normal University, Haikou, 571158, China \\
		 \\ $^{\dag}$ Correspondence to tinggui333@163.com}
	
	\bigskip
	\bigskip
	
	\begin{abstract}
	Quantum-state texture is a newly recognized quantum resource that has garnered attention with the advancement of quantum theory. In this work, we introduce several potential quantum-state texture measure schemes and check whether they satisfy the three fundamental conditions required for a valid quantum-state texture measure. Specifically, the measure induced by the $l_1$-norm serves as a vital tool for quantifying coherence, but we prove that it cannot be used to quantify quantum state texture. Furthermore, we show that while relative entropy and robustness meet three fundamental conditions, they are not optimal for quantifying quantum-state texture. Fortunately, we still find that there are several measures that can be used as the measure standard of quantum-state texture. Among them, the trace distance measure and the geometric measure are two good measurement schemes. In addition, the two measures based on Uhlmann's fidelity are experimentally friendly and can serve as an ideal definition of quantum-state texture measures in nonequilibrium situations. All these researches on quantum-state texture measure theory can enrich the resource theory framework of quantum-state texture.
	\end{abstract}
	
	\pacs{04.70.Dy, 03.65.Ud, 04.62.+v} \maketitle
	
\section{I. Introduction}
The notion of quantum-state texture (QST) was first proposed by Parisio \cite{fp}, as a novel quantum resource intrinsically related to coherence. 
QST can be understood simply like this: Under the given computational basis $\{|i\rangle\}$, we can imagine the density matrix $\rho$ as a plot, and the matrix entry can be regarded as the altitude of this coordinate on the plot. We see the row and column indicators of the matrix as the first two dimensions, and the real part of each $\rho_{ij}$ and an analogous plot for the imaginary part, that is, the altitude, as the third dimension, so each quantum state corresponds to a three-dimensional plot. Under this perspective, the plot will generally show unevenness, or texture. Among them, the simplest possible plot is the one for which all altitudes are filled with the same number. The only existing quantum state which gives rise to such a plot is \begin{equation}\label{e1}
f_1=|f_1\rangle\langle f_1|,
\end{equation}
where $|f_1\rangle=\frac{1}{\sqrt{d}}\sum_{i=0}^{d-1}|i\rangle$ and $d$ is the dimension of Hilbert space $H$.  We refer to
this state as textureless and it alone corresponds to the
zero-resource set of the theory to be developed. Parisio \cite{fp} demonstrated that the circuit layer, which (contains at least one CNOT gate) can be fully characterized by randomized input states and the texture of the output qubits. Notably, this process does not require tomographic protocols or ancillary systems. Therefore, QST shares similarities with quantum coherence, which is a key component in emerging quantum technologies, such as quantum metrology \cite{vgsl,vgsll}, quantum computing \cite{mh}, nanoscale thermodynamics \cite{ggmp,mldj,vngg,gfjd}, and biological systems \cite{sl,sfhm,erra}. Therefore, we have sufficient reason to believe that QST, as the next quantum resource, will have a positive impact on quantum information tasks.
	
	An essential aspect of quantum resource theory is the quantification of quantum resources. As two fundamental quantum resources in quantum science and technology, entanglement and coherence have been the subject of numerous proposed measures \cite{rphk,wkw,vw,sk,fma,vvmb,tcwp,mcc,gy,jzyy,jfjq,gvrt,dsys,eb,srpr,ss,hhe,tbmc,asga,mlhx,kfbu,zxjs,fbhk,jxlh,llss,bhkd,cntr}. In \cite{rphk}, the authors summarize several methods for quantifying entanglement. A class of entanglement measures is based on the natural intuition that the closer a state is to the set of separable states, the less entangled it is. 
	\begin{equation}
	E(\rho)=\mathop{\inf}_{\sigma\in\mathcal{S}}D(\rho,\sigma),
	\end{equation}
	where the $\mathcal{S}$ is the set of separable states. Another way to define an entanglement measure is the convex roof method: we first propose an entanglement measure $E$ that holds for pure states and then extend it to general mixed states by convex roof:
	\begin{equation}
	E(\rho)=\mathop{\min}_{\{p_i,|\psi_i\rangle\}}\sum_{i}p_iE(|\psi_i\rangle),\sum_{i}p_i=1, p_i\geq0,
	\end{equation}
    where the minimum is taken over all possible pure state decomposition $\rho=\sum_{i}p_i|\psi_i\rangle\langle\psi_i|$. The minimum is reached for a particular ensemble called the optimal ensemble \cite{au1}. There are other types of entanglement measures, among which entanglement robustness \cite{gvrt} has been extensively studied and widely applied. Robustness can be interpreted as the ability to maintain certain properties even in the presence of noisy environments. From a chronological perspective, research on coherence measures lags behind entanglement, but the core concepts and construction methods for coherence measures closely resemble those for entanglement. Baumgratz et al. \cite{tbmc} established the conditions that coherence measures should satisfy and identified classes of functions that meet these conditions. Napoli et al. \cite{cntr} defined coherence robustness based on the concept of entanglement robustness, and although both definitions share the same foundational idea, they differ in certain aspects.

As a newly emerging quantum resource, it is natural to consider extending the definition methods used for entanglement and coherence measures to define a QST measure. In \cite{fp}, the author firstly introduced three basic conditions that a QST measure should satisfy. But similar to quantum entanglement and quantum coherence, it should have more quantization standards suitable for different geometric and physical meanings. Therefore, in this manuscript, we mainly study the quantification of QST. We define several measures based on the methods used for constructing entanglement and coherence measures and find that unlike in the cases of entanglement and coherence, $l_1$ measure, robustness and relative entropy are not suitable candidates for quantifying QST. 

The remainder of this paper is organized as follows. In Section 2, we introduce several QST measures and provide analytical lower bounds for the geometric measure of texture [Theorem 1-5]. Additionally, we explain why the three candidates of measure mentioned above are not appropriate QST measures in Appendix [Appendix A-C] and show that the trace distance measure we define offer advantages in distinguishing QST states when compared to existing QST measures for some cases. We summarize and discuss our conclusions in the final section.
     
\section{II. Quantum texture measures}

The basic requirements for a bona fide quantum state texture measure were introduced in \cite{fp}. Specifically, given an arbitrary QST measure, $\mathcal{T}$, it must satisfy the following conditions: 
(i) $\mathcal{T}(\rho)\geq0$, and $\mathcal{T}(f_1)=0$. (ii) $\mathcal{T}(\rho)\geq\mathcal{T}(\Lambda(\rho))$, where $\Lambda$ is completely positive and trace-preserving maps satisfied $\Lambda(f_1)=f_1$, are affected by Kraus operators, $\Lambda(\rho)=\sum_{n}K_n\rho K_n^\dagger$, for which $\sum_{n}K_n^\dagger K_n=\mathbbm{1}$. (iii) $\mathcal{T}$ is convex, i.e. $\mathcal{T}(\sum_{i}p_i\rho_i)\leq\sum_{i}p_i\mathcal{T}(\rho_i)$, $\sum_{i}p_i=1, \,p_i\geq0$. The author in \cite{fp} defined a state texture quantifier,
\begin{equation}\label{e2}
\mathfrak{R}(\rho)=-\ln\langle f_1|\rho|f_1\rangle,
\end{equation}
which they refer to as “state rugosity.” Quantum state texture, as a quantum resource, may also hold potential value due to its close connection with coherence. Thus, exploring possible candidates for QST measurement and analyzing them is an interesting and relevant problem.

A natural way to quantify QST of a state $\rho$ is through its distance to $f_1$ according to some distance measure $D$. Various QST measures correspond to different choices of $D$. A reasonable candidate for a QST measure is the trace distance, defined as:
\begin{equation}\label{e4}
\mathcal{D}(\rho,\sigma)=\frac{1}{2}\text{Tr}|\rho-\sigma|,
\end{equation}
where $\text{Tr}|A|=\text{Tr}(AA^\dagger)^\frac{1}{2}$ is the trace norm (or 1-norm) of $A$. We can then introduce a QST measure based on the trace distance, referred to as the trace distance measure of QST:
\begin{equation}\label{e5}
\mathcal{T}_\text{tr}(\rho)=\mathcal{D}(\rho,f_1).
\end{equation}

\begin{theorem}
The trace distance measure of QST $\mathcal{T}_\text{tr}$ is a well-defined texture measure.
\end{theorem}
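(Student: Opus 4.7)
The plan is to verify directly the three defining conditions (i)--(iii) for $\mathcal{T}_\text{tr}$, relying on three standard properties of the trace distance: non-negativity with $\mathcal{D}(\rho,\rho)=0$, joint convexity, and contractivity under CPTP maps. Since all three properties are classical results in quantum information (e.g.\ Nielsen--Chuang), none of the steps should require a genuinely new argument; the point is to observe that the texture-measure axioms are precisely tailored to these properties once $f_1$ is taken as the zero-resource state.

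For condition (i), I would simply note that $\mathcal{D}(\rho,\sigma)\geq 0$ for all density matrices and that $\mathcal{T}_\text{tr}(f_1)=\mathcal{D}(f_1,f_1)=\tfrac12\text{Tr}|f_1-f_1|=0$. For condition (ii), given a CPTP map $\Lambda$ with $\Lambda(f_1)=f_1$, I would invoke the contractivity of trace distance under CPTP maps,
\begin{equation}
\mathcal{D}(\Lambda(\rho),\Lambda(\sigma))\leq \mathcal{D}(\rho,\sigma),
\end{equation}
and specialize to $\sigma=f_1$ to obtain
\begin{equation}
\mathcal{T}_\text{tr}(\Lambda(\rho))=\mathcal{D}(\Lambda(\rho),f_1)=\mathcal{D}(\Lambda(\rho),\Lambda(f_1))\leq \mathcal{D}(\rho,f_1)=\mathcal{T}_\text{tr}(\rho).
\end{equation}

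For condition (iii), I would use joint convexity of trace distance (which itself follows from the triangle inequality for the trace norm combined with homogeneity): for $\rho=\sum_i p_i \rho_i$ with $p_i\geq 0$ and $\sum_i p_i=1$,
\begin{equation}
\mathcal{T}_\text{tr}\!\left(\sum_i p_i\rho_i\right)=\tfrac12\text{Tr}\Bigl|\sum_i p_i(\rho_i-f_1)\Bigr|\leq \sum_i p_i\,\tfrac12\text{Tr}|\rho_i-f_1|=\sum_i p_i\,\mathcal{T}_\text{tr}(\rho_i),
\end{equation}
where I write $f_1=\sum_i p_i f_1$ to pull $f_1$ inside the sum.

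I do not expect a real obstacle here: the only mildly non-routine input is the contractivity of trace distance, which is a well-established property and can be cited rather than reproven. If a self-contained argument is preferred, I would sketch contractivity via the variational characterization $\mathcal{D}(\rho,\sigma)=\max_{0\leq P\leq \mathbbm{1}}\text{Tr}[P(\rho-\sigma)]$ together with the fact that $\Lambda^{\dagger}(P)$ is again an effect whenever $P$ is. All three steps together establish that $\mathcal{T}_\text{tr}$ satisfies the axioms of a bona fide QST measure.
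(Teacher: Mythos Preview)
Your proposal is correct and follows essentially the same route as the paper: condition (i) is immediate, condition (ii) is obtained from CPTP-contractivity of the trace distance together with $\Lambda(f_1)=f_1$, and condition (iii) comes from convexity of the trace distance after writing $f_1=\sum_i p_i f_1$. The only cosmetic difference is that the paper phrases (iii) via the strong-convexity inequality $\mathcal{D}(\sum_i p_i\rho_i,\sum_i q_i\sigma_i)\leq D(p,q)+\sum_i p_i\mathcal{D}(\rho_i,\sigma_i)$ specialized to $q_i=p_i$, whereas you invoke the triangle inequality for the trace norm directly; these are equivalent here.
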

\begin{proof}
For any quantum state $\rho$, it is evident that $\mathcal{T}_\text{tr}(\rho)\geq0$ and $\mathcal{T}_\text{tr}(f_1)=0$.

For completely positive and trace-preserving maps $\Lambda$, we have:
\begin{equation*}
\begin{split}
\mathcal{T}_\text{tr}(\Lambda(\rho))&=\mathcal{D}(\Lambda(\rho),f_1)\\
                               &=\mathcal{D}(\Lambda(\rho),\Lambda(f_1))\\
                               &\leq\mathcal{D}(\rho,f_1)\\
                               &=\mathcal{T}_\text{tr}(\rho),
\end{split}
\end{equation*}
where the inequality is according to Ref. \cite{mc}.

The authors in \cite{mc} have demonstrated the strong convexity of trace distance, which can be expressed as:
\begin{equation}\label{e6}
\mathcal{D}(\sum_{i}p_i\rho_i,\sum_{i}q_i\sigma_i)\leq D(p,q)+\sum_{i}p_i\mathcal{D}(\rho_i,\sigma_i),
\end{equation}
where $D$ represents the classical trace distance between probability distribution $\{p_i\}$ and $\{q_i\}$: $D(p,q)=\frac{1}{2}\sum_{i}|p_i-q_i|$.
Thus, we obtain:
\begin{equation*}
\begin{split}
\mathcal{T}_\text{tr}(\sum_{i}p_i\rho_i)&=\mathcal{D}(\sum_{i}p_i\rho_i,f_1)\\
                                   &=\mathcal{D}(\sum_{i}p_i\rho_i,\sum_{i}p_if_1)\\
                                   &\leq D(p,p)+\sum_{i}p_i\mathcal{D}(\rho_i,f_1)\\
                                   &=\sum_{i}p_i\mathcal{D}(\rho_i,f_1)\\
                                   &=\sum_{i}p_i\mathcal{T}_\text{tr}(\rho_i).
\end{split}
\end{equation*}
The inequality above is based on Eq. (\ref{e6}).
\end{proof}
The $l_1$ matrix norm serves as a common distance measure in quantum resource theory and plays a significant role in quantifying quantum resources such as coherence \cite{tbmc}. However, as we demonstrate in Appendix A, the measure induced by the $l_1$ norm is unsuitable for quantifying QST.

The degree of QST of a pure quantum state can be characterized by the distance or angle to the textureless state. Based on this insight, we propose the geometric measure of QST. For a pure state $|\psi\rangle$, the geometric measure of QST $\mathcal{T}_g(|\psi\rangle)$ is
\begin{equation}\label{e11}
\mathcal{T}_g(|\psi\rangle)=1-|\langle f_1|\psi\rangle|^2.
\end{equation}
The geometric measure for general mixed state $\rho$ is given by
convex roof extension,
\begin{equation}\label{e12}
	\mathcal{T}_g(\rho)=\mathop{\min}_{\{p_i,|\psi_i\rangle\}}
	\sum_{i}p_i\mathcal{T}_g(|\psi_i\rangle),
\end{equation}
where the minimization goes over all possible pure-state
decompositions of
$\rho=\sum_{i}p_i|\psi_i\rangle\langle\psi_i|$.
\begin{theorem}
The geometric measure is an eligible QST measure.
\end{theorem}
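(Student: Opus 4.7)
The plan is to verify the three axioms (i), (ii), (iii) in turn, with the convex-roof construction doing most of the bookkeeping. Condition (i) is immediate for pure states from the Cauchy--Schwarz bound $|\langle f_1|\psi\rangle|^2\le 1$, and extends to the mixed case since the convex roof of a non-negative functional is non-negative; moreover, plugging the trivial decomposition $f_1=|f_1\rangle\langle f_1|$ into Eq.~(\ref{e12}) yields $\mathcal{T}_g(f_1)=0$. Condition (iii) is also essentially free of charge: any functional defined by a convex roof is automatically convex, by concatenating decompositions of the $\rho_i$'s into a decomposition of $\sum_i p_i\rho_i$.

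The substantive work is condition (ii), monotonicity under CPTP maps $\Lambda=\sum_n K_n\cdot K_n^\dagger$ that fix $f_1$. I would first establish the pure-state version. Starting from an arbitrary $|\psi\rangle$, let $p_{n|\psi}=\langle\psi|K_n^\dagger K_n|\psi\rangle$ and $|\phi_n\rangle=K_n|\psi\rangle/\sqrt{p_{n|\psi}}$, so that $\Lambda(|\psi\rangle\langle\psi|)=\sum_n p_{n|\psi}|\phi_n\rangle\langle\phi_n|$ is an explicit ensemble decomposition of the output. Using this ensemble in the convex roof (\ref{e12}) gives
\begin{equation*}
\mathcal{T}_g(\Lambda(|\psi\rangle\langle\psi|))\le\sum_n p_{n|\psi}\bigl(1-|\langle f_1|\phi_n\rangle|^2\bigr)=1-\sum_n|\langle f_1|K_n|\psi\rangle|^2.
\end{equation*}
The remaining inequality $\sum_n|\langle f_1|K_n|\psi\rangle|^2\ge|\langle f_1|\psi\rangle|^2$ is precisely $\langle f_1|\Lambda(|\psi\rangle\langle\psi|)|f_1\rangle\ge|\langle f_1|\psi\rangle|^2$, which follows from the monotonicity of Uhlmann's fidelity under CPTP maps together with $\Lambda(f_1)=f_1$ (since $F(\sigma,f_1)=\langle f_1|\sigma|f_1\rangle$ when $f_1$ is pure). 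Hence $\mathcal{T}_g(\Lambda(|\psi\rangle\langle\psi|))\le\mathcal{T}_g(|\psi\rangle)$.

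To lift this to mixed $\rho$, I would pick an optimal decomposition $\rho=\sum_i p_i|\psi_i\rangle\langle\psi_i|$ realising $\mathcal{T}_g(\rho)$, write $\Lambda(\rho)=\sum_{i,n} p_i p_{n|\psi_i}|\phi_{i,n}\rangle\langle\phi_{i,n}|$, and feed this ensemble into the convex roof for $\Lambda(\rho)$. The resulting upper bound factors as $\sum_i p_i\sum_n p_{n|\psi_i}(1-|\langle f_1|\phi_{i,n}\rangle|^2)\le\sum_i p_i\mathcal{T}_g(|\psi_i\rangle)=\mathcal{T}_g(\rho)$ by applying the pure-state monotonicity term by term in $i$.

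The main obstacle is really the fidelity-monotonicity step in the pure-state case: everything else is either definitional (non-negativity, value at $f_1$) or a standard convex-roof manipulation (convexity, passage from pure to mixed). Once one recognises that $\langle f_1|\rho|f_1\rangle$ is the fidelity with the pure state $f_1$ and that $\Lambda(f_1)=f_1$ makes the data-processing inequality act in the favourable direction, the proof falls into place.
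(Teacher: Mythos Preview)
Your proposal is correct and follows the same overall architecture as the paper: condition (i) is immediate, (iii) comes for free from the convex-roof definition by concatenating decompositions, and (ii) is established first for pure inputs and then lifted to mixed states via an optimal decomposition together with convexity.

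The one genuine difference is the lemma driving the pure-state monotonicity. The paper works at the Kraus level, invoking the structural fact (from \cite{fp}) that the free operations satisfy $K_n f_1\propto f_1$, so that each overlap $\langle f_1|K_n|\psi\rangle$ can be written as $\alpha_n\langle f_1|\psi\rangle$; summing over $n$ then reproduces $\mathcal{T}_g(|\psi\rangle)$ exactly. You instead recognise that $\sum_n|\langle f_1|K_n|\psi\rangle|^2=\langle f_1|\Lambda(|\psi\rangle\langle\psi|)|f_1\rangle=F\bigl(\Lambda(|\psi\rangle\langle\psi|),f_1\bigr)$ and appeal to the data-processing inequality for Uhlmann fidelity together with $\Lambda(f_1)=f_1$. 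Your route is more abstract and avoids any Kraus-level unpacking---indeed it is essentially the same mechanism the paper later uses for $\mathcal{T}_F$ via Eq.~(\ref{e19})---while the paper's explicit computation is more hands-on and, when the eigenvector relation holds, delivers an equality rather than merely the inequality you need at the ensemble step. Both arguments are valid.
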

\begin{proof}
It is straightforward to verify that for any state $\rho$, $\mathcal{T}_g(\rho)\geq0$ and $\mathcal{T}_g(f_1)=0$.

Now, prove convexity. Consider $\rho=t\rho_1+(1-t)\rho_2$. Let $\rho_1=\sum_{i}p_i|\psi_i\rangle\langle\psi_i|$ and $\rho_2=\sum_{j}q_j|\phi_j\rangle\langle\phi_j|$ be the optimal pure-state decomposition of $\mathcal{T}_g(\rho_1)$ and $\mathcal{T}_g(\rho_2)$, respectively. Where $\sum_{i}p_i=\sum_{j}q_j=1$ and $p_i$, $q_j>0$. We then have
\begin{equation*}
\begin{split}
\mathcal{T}_g(\rho)&=\mathcal{T}_g(t\sum_{i}p_i|\psi_i\rangle\langle\psi_i|+(1-t)\sum_{j}q_j|\phi_j\rangle\langle\phi_j|)\\
                   &\leq \sum_{i}tp_i\mathcal{T}_g(|\psi_i\rangle\langle\psi_i|)+\sum_{j}(1-t)q_j\mathcal{T}_g(|\phi_j\rangle\langle\phi_j|)\\
                   &=t\mathcal{T}_g(\rho_1)+(1-t)\mathcal{T}_g(\rho_2),
\end{split}
\end{equation*}
where the inequality is due to that $\sum_{i}tp_i|\psi_i\rangle\langle\psi_i|+\sum_{j}(1-t)q_j|\phi_j\rangle\langle\phi_j|$ is also a pure state decomposition of $\rho$.

We demonstrate that the geometric measure of QST $\mathcal{T}_g$ cannot increase under completely positive and trace-preserving maps $\Lambda$ in the following. For any pure state $|\psi\rangle$, one have
\begin{equation*}
\begin{split}
\mathcal{T}_g(\Lambda(|\psi\rangle))&=\mathcal{T}_g(\sum_{n}K_n|\psi\rangle\langle\psi|K_n^\dagger)\\
                                    &\leq\sum_{n}p_n\mathcal{T}_g(K_n|\psi\rangle\langle\psi|K_n^\dagger/\text{Tr}(K_n|\psi\rangle\langle\psi|K_n^\dagger))\\
                                    &=\sum_{n}p_n\mathcal{T}_g(K_n|\psi\rangle/\sqrt{p_n})\\
                                    &=\sum_{n}p_n(1-\alpha_n^2|\langle f_1|\psi\rangle|^2/p_n)\\
                                    &=\mathcal{T}_g(|\psi\rangle),      
\end{split}
\end{equation*}
where $p_n=\text{Tr}(K_n|\psi\rangle\langle\psi|K_n^\dagger)$, the inequality follows from convexity, and the third equality arises because
\begin{equation*}
\begin{split}
\mathcal{T}_g(K_n|\psi\rangle/\sqrt{p_n})&=1-|\langle f_1|K_n|\psi\rangle|^2/p_n\\
                                         &=1-|\langle f_1|\alpha_n|\psi\rangle|^2/p_n\\
                                         &=1-\alpha_n^2|\langle f_1|\psi\rangle|^2/p_n,\\
\end{split}
\end{equation*}
where the second equality above follows from $f_1\propto K_n f_1$. Let $\rho=\sum_{i}p_i|\psi_i\rangle\langle\psi_i|$ be the optimal pure state decomposition of $\mathcal{T}_g(\rho)$, i.e. $\mathcal{T}_g(\rho)=\sum_{i}p_i\mathcal{T}_g(|\psi_i\rangle)$. Then, we have
\begin{equation*}
\begin{split}
\mathcal{T}_g(\Lambda(\rho))&=\mathcal{T}_g(\sum_{i}p_i\Lambda(|\psi_i\rangle))\\
                            &\leq\sum_{i}p_i\mathcal{T}_g(\Lambda(|\psi_i\rangle))\\
                            &\leq\sum_{i}p_i\mathcal{T}_g(|\psi_i\rangle)\\
                            &=\mathcal{T}_g(\rho).
\end{split}
\end{equation*}
\end{proof}

For general quantum states, the geometric measure of QST $\mathcal{T}_g$ is difficult to obtain directly, so it is useful to provide an analytical lower bound for $\mathcal{T}_g$. In fact, the connection between the geometric measure and the trace distance leads to a lower bound of $\mathcal{T}_g$.

From Eq. (\ref{e6}), we can easily conclude that $\mathcal{D}$ is jointly convex,
\begin{equation}\label{e13}
\mathcal{D}(\sum_{i}p_i\rho_i,\sum_{i}p_i\sigma_i)\leq\sum_{i}p_i\mathcal{D}(\rho_i,\sigma_i).
\end{equation}
Then, we have the following results.
\begin{theorem}
For any quantum state $\rho$, we have
\begin{equation}\label{e14}
\mathcal{T}_g(\rho)\geq[\mathcal{D}(\rho,f_1)]^2.
\end{equation}
\end{theorem}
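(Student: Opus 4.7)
The plan is to reduce the inequality to pure states, where the geometric measure and the squared trace distance coincide, and then upgrade it to mixed states by combining the convex-roof definition of $\mathcal{T}_g$ with the joint convexity of $\mathcal{D}$ just recorded in Eq. (\ref{e13}).

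First I would verify the pure-state identity
\begin{equation*}
\mathcal{T}_g(|\psi\rangle)=[\mathcal{D}(|\psi\rangle\langle\psi|,f_1)]^2.
\end{equation*}
This follows from the standard fact that the trace distance between two rank-one projectors $|\psi\rangle\langle\psi|$ and $|f_1\rangle\langle f_1|$ equals $\sqrt{1-|\langle f_1|\psi\rangle|^2}$ (obtained by diagonalizing the traceless hermitian operator $|\psi\rangle\langle\psi|-|f_1\rangle\langle f_1|$ inside the two-dimensional span of $|\psi\rangle$ and $|f_1\rangle$). Squaring gives exactly the definition of $\mathcal{T}_g(|\psi\rangle)$ in Eq. (\ref{e11}).

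Next, let $\rho=\sum_i p_i|\psi_i\rangle\langle\psi_i|$ be an optimal pure-state decomposition realizing the minimum in Eq. (\ref{e12}). Using the pure-state identity and applying Jensen's inequality to the convex function $x\mapsto x^2$ gives
\begin{equation*}
\mathcal{T}_g(\rho)=\sum_i p_i[\mathcal{D}(|\psi_i\rangle\langle\psi_i|,f_1)]^2\geq\Bigl(\sum_i p_i\,\mathcal{D}(|\psi_i\rangle\langle\psi_i|,f_1)\Bigr)^2.
\end{equation*}
Then, writing $f_1=\sum_i p_i f_1$ and invoking the joint convexity of the trace distance from Eq. (\ref{e13}),
\begin{equation*}
\sum_i p_i\,\mathcal{D}(|\psi_i\rangle\langle\psi_i|,f_1)\geq\mathcal{D}\Bigl(\sum_i p_i|\psi_i\rangle\langle\psi_i|,f_1\Bigr)=\mathcal{D}(\rho,f_1).
\end{equation*}
Chaining these two estimates yields $\mathcal{T}_g(\rho)\geq[\mathcal{D}(\rho,f_1)]^2$, which is the claim.

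I do not expect any serious obstacle: the pure-state computation is a one-line trigonometric identity for rank-two hermitian operators, and both Jensen and joint convexity are invoked in their standard forms. The only subtlety worth stating carefully is that the squaring step requires the pure-state equality (not merely an inequality), since otherwise combining $\mathcal{T}_g\geq\mathcal{D}^2$ pointwise with convex-roof averaging would not directly give the desired bound; having equality for pure states is what makes the Jensen step legitimate.
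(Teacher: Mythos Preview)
Your proposal is correct and follows essentially the same route as the paper: establish the pure-state equality $\mathcal{T}_g(|\psi\rangle)=\mathcal{D}(|\psi\rangle\langle\psi|,f_1)^2$, take an optimal decomposition of $\rho$, apply Jensen's inequality for $x\mapsto x^2$, and then use the joint convexity of $\mathcal{D}$ from Eq.~(\ref{e13}). The only cosmetic difference is that the paper derives the pure-state identity via the Frobenius-norm relation $\mathcal{D}(|\psi\rangle,|\phi\rangle)^2=\tfrac{1}{2}\||\psi\rangle\langle\psi|-|\phi\rangle\langle\phi|\|_F^2$ from \cite{pjcm}, whereas you invoke the direct diagonalization of the rank-two difference; both yield the same equality.
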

\begin{proof}
Firstly, for any two pure states $|\psi\rangle$ and $|\phi\rangle$, one have \cite{pjcm}
\begin{equation}\label{e15}
\mathcal{D}(|\psi\rangle,|\phi\rangle)^2=\frac{1}{2}\||\psi\rangle\langle\psi|-|\phi\rangle\langle\phi|\|_F^2,
\end{equation}
where $\|\,.\,\|_F$ is the Frobenius norm. Note that for the pure state $f_1$,
\begin{equation*}
\begin{split}
\frac{1}{2}\||\psi\rangle\langle\psi|-|f_1\rangle\langle f_1|\|_F^2&=\text{Tr}[(|\psi\rangle\langle\psi|-|f_1\rangle\langle f_1|)^2]\\
          &=2-2\text{Tr}(|\psi\rangle\langle\psi|.|f_1\rangle\langle f_1|)\\
          &=2-2|\langle\psi|f_1\rangle|^2.
\end{split}
\end{equation*}
Substituting the above equation into Eq. (\ref{e15}) yields
\begin{equation}\label{e16}
\mathcal{D}(|\psi\rangle,f_1)^2=\mathcal{T}_g(|\psi\rangle).
\end{equation}
Thus for general state $\rho$, we have
\begin{equation*}
\begin{split}
\mathcal{T}_g(\rho)&=\sum_{i}p_i\mathcal{T}_g(|\psi_i\rangle)\\
                   &=\sum_{i}p_i\mathcal{D}(|\psi_i\rangle,f_1)^2\\
                   &\geq[\sum_{i}p_i\mathcal{D}(|\psi_i\rangle,f_1)]^2\\
                   &\geq\mathcal{D}(\rho,f_1)^2,
\end{split}
\end{equation*}
where $\rho=\sum_{i}p_i|\psi_i\rangle\langle\psi_i|$ is the optimal pure state decomposition corresponding to $\mathcal{T}_g$, the second equation is due to (\ref{e16}), and the last inequality is according to the convexity of $\mathcal{D}$.
\end{proof}

Uhlmann's Fidelity between two quantum states is defined as \cite{au}
\begin{equation}\label{e17}
F(\rho,\sigma)=[\text{Tr}(\sqrt{\rho^{1/2}\sigma\rho^{1/2}})]^2.
\end{equation}
In particular, if one state is pure, $\sigma=|\psi\rangle\langle\psi|$, then $F(\rho,\sigma)=\langle\psi|\rho|\psi\rangle$. Hence, the QST measure induced by fidelity is defined as
\begin{equation}\label{e18}
\mathcal{T}_F(\rho)=1-F(\rho,f_1).
\end{equation}
\begin{theorem}
$\mathcal{T}_F$ is a faithful quantum-state texture measure.
\end{theorem}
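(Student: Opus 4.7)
The plan is to reduce the problem at the outset by exploiting that $f_1$ is pure: Uhlmann's fidelity then collapses to $F(\rho,f_1)=\langle f_1|\rho|f_1\rangle$, so $\mathcal{T}_F(\rho)=1-\langle f_1|\rho|f_1\rangle$ becomes an \emph{affine} functional of $\rho$. With this observation in hand, each of the three axioms (i)--(iii), together with the faithfulness claim, should fall out with minimal effort, and I will dispatch them in the natural order.

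For condition (i), I first observe that $0\leq\langle f_1|\rho|f_1\rangle\leq\|\rho\|_\infty\leq 1$, whence $\mathcal{T}_F(\rho)\geq 0$; direct substitution gives $\mathcal{T}_F(f_1)=1-1=0$. To additionally establish the faithfulness promised in the theorem, I will argue that $\langle f_1|\rho|f_1\rangle=1$ forces $|f_1\rangle$ to be an eigenvector of $\rho$ with eigenvalue $1$, which combined with $\mathrm{Tr}\,\rho=1$ and $\rho\succeq 0$ forces $\rho=f_1$; hence $\mathcal{T}_F(\rho)=0$ characterizes the textureless state.

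For condition (ii), I plan to invoke the data-processing inequality $F(\Lambda(\rho),\Lambda(\sigma))\geq F(\rho,\sigma)$, valid for any CPTP map $\Lambda$. Setting $\sigma=f_1$ and using the hypothesis $\Lambda(f_1)=f_1$ immediately yields $F(\Lambda(\rho),f_1)\geq F(\rho,f_1)$, i.e.\ $\mathcal{T}_F(\Lambda(\rho))\leq\mathcal{T}_F(\rho)$. Condition (iii) is even easier: since $\rho\mapsto\langle f_1|\rho|f_1\rangle$ is linear, one has $\mathcal{T}_F(\sum_i p_i\rho_i)=\sum_i p_i\mathcal{T}_F(\rho_i)$, so convexity holds with equality.

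I do not anticipate any substantial obstacle. The entire argument rests on the purity of $f_1$, which collapses Uhlmann's fidelity to a linear expectation value, together with one standard external input, namely the monotonicity of fidelity under CPTP maps. The only step demanding mild care is faithfulness, where I must invoke the eigenvalue characterization of saturation in $\langle f_1|\rho|f_1\rangle\leq 1$; everything else is routine.
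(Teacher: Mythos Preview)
Your proposal is correct and follows essentially the same route as the paper: both begin by collapsing Uhlmann's fidelity to $\langle f_1|\rho|f_1\rangle$ via the purity of $f_1$, which makes $\mathcal{T}_F$ affine and renders (i) and (iii) immediate. The one substantive difference is in (ii): you invoke the general data-processing inequality $F(\Lambda(\rho),\Lambda(\sigma))\geq F(\rho,\sigma)$ for CPTP maps together with $\Lambda(f_1)=f_1$, whereas the paper cites Parisio's entry-sum inequality $\sum_{i,j}\rho_{ij}\leq\sum_{i,j}\Lambda(\rho)_{ij}$ from \cite{fp}, which is exactly the statement $\langle f_1|\rho|f_1\rangle\leq\langle f_1|\Lambda(\rho)|f_1\rangle$ in disguise. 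Your route is slightly more self-contained (it does not depend on the specific Kraus structure of texture-free operations), while the paper's route ties the result back to the foundational reference. You also supply an explicit faithfulness argument ($\mathcal{T}_F(\rho)=0\Rightarrow\rho=f_1$), which the paper's proof omits despite the theorem's wording; this is a welcome addition.
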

\begin{proof}
Note that $f_1=|f_1\rangle\langle f_1|$ is pure, so we have
\begin{equation*}
\begin{split}
\mathcal{T}_F(\rho)&=1-F(\rho,f_1)\\
                   &=1-\langle f_1|\rho|f_1\rangle\\
                   &=1-\sum_{i,j}\rho_{ij}/d,
\end{split}
\end{equation*}
where $d$ is the dimensional of a quantum system. For condition (i), we have $\mathcal{T}_F(\rho)\geq0$ for any $\rho$ and $\mathcal{T}_F(f_1)=0$.

For condition (ii), the author proved in Ref. \cite{fp} that
\begin{equation}\label{e19}
\sum_{i,j}\rho_{ij}\leq\sum_{i,j}\Lambda(\rho)_{ij}.
\end{equation}
According to Eq. (\ref{e19}), we can obtain
\begin{equation*}
\mathcal{T}_F(\Lambda(\rho))=1-\sum_{i,j}\Lambda(\rho)_{ij}/d\leq1-\sum_{i,j}\rho_{ij}/d=\mathcal{T}_F(\rho).
\end{equation*}

For convexity, let $\rho=t\sigma+(1-t)\tau$, one have
\begin{equation*}
\begin{split}
\mathcal{T}_F(\rho)&=\mathcal{T}_F(t\sigma+(1-t)\tau)\\
                   &=1-\sum_{i,j}(t\sigma_{ij}+(1-t)\tau_{ij})/d\\
                   &=t-t\sum_{i,j}\sigma_{ij}/d+(1-t)-(1-t)\sum_{i,j}\tau_{ij}/d\\
                   &\leq t\mathcal{T}_F(\sigma)+(1-t)\mathcal{T}_F(\tau).
\end{split}
\end{equation*}
\end{proof}

From the proof in Theorem 4, we observe that $F(\rho,f_1)=\text{Tr}(\rho f_1)$, which indicates that $\mathcal{T}_F$ is an experimentally measurable QST measure. In addition, we can also define the Bures measure \cite{vvmb} of QST based on fidelity
\begin{equation*}
\mathcal{T}_B(\rho)=2(1-\sqrt{F(\rho,f_1)}).
\end{equation*}
Like $\mathcal{T}_F$, it is also an experimentally friendly QST measure.

\begin{theorem}
The Bures measure $\mathcal{T}_B$ is a well-defined QST measure.
\end{theorem}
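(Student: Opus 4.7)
The plan is to verify the three axioms directly, leveraging that $f_1$ is a pure state so that $F(\rho, f_1)$ collapses to the simple overlap $\langle f_1|\rho|f_1\rangle$, which is linear in $\rho$ and takes values in $[0,1]$.

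\textbf{Condition (i).} Since $0 \le \langle f_1|\rho|f_1\rangle \le 1$, the square root lies in $[0,1]$, so $\mathcal{T}_B(\rho)\ge 0$. Substituting $\rho=f_1$ gives $F(f_1,f_1)=1$ and hence $\mathcal{T}_B(f_1)=0$. This step is essentially a one-liner.

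\textbf{Condition (ii).} I would appeal to the standard monotonicity of Uhlmann's fidelity under CPTP maps, $F(\Lambda(\rho),\Lambda(\sigma))\ge F(\rho,\sigma)$ (a textbook fact, e.g.\ in the same reference as used for the trace-distance proof of Theorem 1). Taking $\sigma=f_1$ and using $\Lambda(f_1)=f_1$ gives $F(\Lambda(\rho),f_1)\ge F(\rho,f_1)$, so applying the monotone function $x\mapsto 2(1-\sqrt{x})$ (which is decreasing on $[0,1]$) yields $\mathcal{T}_B(\Lambda(\rho))\le \mathcal{T}_B(\rho)$.

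\textbf{Condition (iii).} This is the step I would be most careful about, though it is still routine. Because $f_1$ is pure, $F(\rho,f_1)=\langle f_1|\rho|f_1\rangle$ is \emph{linear} in $\rho$. For $\rho=t\sigma+(1-t)\tau$ I would write
\begin{equation*}
\sqrt{F(\rho,f_1)} = \sqrt{t F(\sigma,f_1)+(1-t)F(\tau,f_1)} \ge t\sqrt{F(\sigma,f_1)}+(1-t)\sqrt{F(\tau,f_1)},
\end{equation*}
using concavity of $\sqrt{\cdot}$. Multiplying by $-2$ and adding $2$ (using $t+(1-t)=1$) gives
\begin{equation*}
\mathcal{T}_B(t\sigma+(1-t)\tau) \le t\,\mathcal{T}_B(\sigma)+(1-t)\,\mathcal{T}_B(\tau),
\end{equation*}
which is the desired convexity and extends to arbitrary convex combinations by induction.

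The only conceptual subtlety is recognizing that the purity of $f_1$ collapses $F$ to an affine functional of $\rho$, so concavity of the square root immediately yields convexity of $\mathcal{T}_B$; no strong convexity argument or optimization over decompositions is required, in contrast with the geometric measure of Theorem 2.
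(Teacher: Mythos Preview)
Your proof is correct; the three conditions are verified by valid arguments. The overall structure matches the paper's, but you take a slightly different route in two places.

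For condition (ii), the paper does not invoke the general CPTP monotonicity of Uhlmann's fidelity; instead it recycles the inequality $\sum_{i,j}\rho_{ij}\le\sum_{i,j}\Lambda(\rho)_{ij}$ already used in the proof of Theorem~4 (which in turn rests on Parisio's $K_n|f_1\rangle\propto|f_1\rangle$), and observes that this is exactly $F(\rho,f_1)\le F(\Lambda(\rho),f_1)$. Your appeal to the textbook monotonicity of $F$ under CPTP maps combined with $\Lambda(f_1)=f_1$ achieves the same inequality more directly and without reference to the earlier theorem. For condition (iii), the paper quotes the general \emph{joint} concavity of $\sqrt{F(\cdot,\cdot)}$ in both arguments and then specializes the second slot to $f_1$. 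You instead exploit the purity of $f_1$ to reduce $F(\rho,f_1)$ to the linear functional $\langle f_1|\rho|f_1\rangle$ and then need only the scalar concavity of $\sqrt{\,\cdot\,}$ on $[0,1]$. Your argument is more elementary and self-contained (no external concavity lemma for the fidelity is required); the paper's argument has the advantage of reusing results already stated and would survive unchanged if the free set contained non-pure states.
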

\begin{proof}
First of all, since $F(\rho,f_1)=1$ if and only if $\rho=f_1$ and $0\leq F(\rho_1,\rho_2)\leq1$ for any states $\rho_1$ and $\rho_2$ \cite{jamz}, we have $\mathcal{T}_B(f_1)=0$ and $\mathcal{T}_B(\rho)\geq0$.

Secondly, based on the proof in Theorem 5, we obtain that $F(\Lambda(\rho),f_1)\geq F(\rho,f_1)$. Using monotonicity, it is easy to conclude that
\begin{small}
\begin{equation*}
\mathcal{T}_{B}(\Lambda(\rho))=2(1-\sqrt{F(\Lambda(\rho),f_1)})\leq2(1-\sqrt{F(\rho,f_1)})=\mathcal{T}_{B}(\rho).
\end{equation*}
\end{small}

Finally, we prove convexity. It is known that $\sqrt{F}$ is jointly concave \cite{jamz}, i.e.
\begin{equation}\label{e20}
\begin{split}
&\sqrt{F(t\rho_1+(1-t)\rho_2,t\sigma_1+(1-t)\sigma_2)}\geq t\sqrt{F(\rho_1,\sigma_1)}\\
&+(1-t)\sqrt{F(\rho_2,\sigma_2)}.
\end{split}
\end{equation}
Then we have
\begin{footnotesize}
\begin{equation*}
	\begin{split}
		\mathcal{T}_B(t\rho_1+(1-t)\rho_2)&=2(1-\sqrt{F(t\rho_1+(1-t)\rho_2,f_1)})\\
		&=2(1-\sqrt{F(t\rho_1+(1-t)\rho_2,tf_1+(1-t)f_1)})\\
		&\leq2(1-t\sqrt{F(\rho_1,f_1)}-(1-t)\sqrt{F(\rho_2,f_1)})\\
		&=t.2(1-\sqrt{F(\rho_1,f_1)})+(1-t).2(1-\sqrt{F(\rho_2,f_1)})\\
		&=t\mathcal{T}_B(\rho_1)+(1-t)\mathcal{T}_B(\rho_2).
	\end{split}
\end{equation*}
\end{footnotesize}
\end{proof}

In fact, like state rugosity in \cite{fp}, the QST measures $\mathcal{T}_F$ and $\mathcal{T}_B$ above can also serve as indicators for a nonequilibrium situation. A general system in thermal equilibrium with a reservoir at absolute temperature $T$ is being considered by us, which can be described by the canonical Gibbs state,
\begin{equation}\label{e21}
\varrho(T)=\frac{1}{\mathcal{Z}}\sum_{i}e^{-E_i/k_BT}|i\rangle\langle i|,
\end{equation}
where $\mathcal{Z}$ is the canonical partition functional and $k_B$ is the Boltzmann constant. The calculation shows that
\begin{equation*}
\begin{split}
&\mathcal{T}_F(\varrho(T))=\frac{d-1}{d},\\
&\mathcal{T}_B(\varrho(T))=\frac{2(d-\sqrt{d})}{d},
\end{split}
\end{equation*}
where we used the fact that $\sum_{i}\frac{e^{-E_i/k_BT}}{\mathcal{Z}}=1$ because $\varrho(T)$ is a diagonal state. This means that $\mathcal{T}_F$ and $\mathcal{T}_B$ only depend on the dimensionality $d$ here, and are independent of temperature. Therefore, in this case, the variation of QST with temperature $T$ can serve as a witness of an out-of-equilibrium situation.

Consider the coherent Gibbs kets:
\begin{equation}
|\psi\rangle_T=\frac{1}{\sqrt{\mathcal{Z}}}\sum_{i}e^{-E_i/2k_BT}|i\rangle,
\end{equation}
which is a class of nonequilibrium states. Now use our measures to calculate the QST of $|\psi\rangle_T$, it can be obtained that
\begin{equation*}
\begin{split}
&\mathcal{T}_F(|\psi\rangle_T)=\frac{d-1}{d}-\frac{1}{d\mathcal{Z}}\sum_{i\neq j}e^{E_j-E_i/2k_BT},\\
&\mathcal{T}_B(|\psi\rangle_T)=2-2\sqrt{\frac{1}{d\mathcal{Z}}\sum_{i,j}e^{E_j-E_i/2k_BT}}.
\end{split}
\end{equation*}
So if $T_1\neq T_2$, then $\mathcal{T}_F(|\psi\rangle_{T_1})\neq\mathcal{T}_F(|\psi\rangle_{T_2})$,  and $\mathcal{T}_B(|\psi\rangle_{T_1})\neq\mathcal{T}_B(|\psi\rangle_{T_2})$.

In addition to the QST measures given above, we can also consider the relative entropy and robustness. In Appendix B, we show that although the relative entropy measure $\mathcal{T}_r(\rho)$ satisfies the above three conditions, it is not a suitable candidate for quantifying QST. Moreover, we have proved in Appendix C that the robustness $\mathcal{T}_R(\rho)$ also meets these conditions, but it is also not a suitable candidate for quantifying QST. Both schemes have a big disadvantage, that is, the degree of distinction between different quantum states is too low. Look at the following concrete example.

\mathbi{Example 1.} Let us consider the two-qubit Bell states $|\psi^+\rangle=(|00\rangle+|11\rangle)/\sqrt{2}$ and  $|\psi^-\rangle=(|00\rangle-|11\rangle)/\sqrt{2}$. We now use our defined QST measure to calculate their quantum-state texture.

By simple calculation, one has:
\begin{equation*}
\begin{split}
&\mathcal{T}_{tr}(|\psi^+\rangle)=\frac{\sqrt{2}}{2},\,\mathcal{T}_{tr}(|\psi^-\rangle)=1;\\
&\mathcal{T}_g(|\psi^+\rangle)=\frac{1}{2},\,\mathcal{T}_g(|\psi^-\rangle)=1;\\
&\mathcal{T}_F(|\psi^+\rangle)=\frac{1}{2},\,\mathcal{T}_F(|\psi^-\rangle)=1;\\
&\mathcal{T}_B(|\psi^+\rangle)=2-\sqrt{2},\,\mathcal{T}_B(|\psi^-\rangle)=2;\\
&\mathcal{T}_r(|\psi^+\rangle)=+\infty,\,\mathcal{T}_r(|\psi^-\rangle)=+\infty;\\
&\mathcal{T}_R(|\psi^+\rangle)=+\infty,\,\mathcal{T}_R(|\psi^-\rangle)=+\infty;\\
&\mathfrak{R}(|\psi^+\rangle)=\ln 2,\,\mathfrak{R}(|\psi^-\rangle)=+\infty.
\end{split}
\end{equation*}
Through this concrete example, we can also clearly see that neither $\mathcal{T}_r(\rho)$ nor $\mathcal{T}_R(\rho)$ is suitable as a measure of the QST. In fact, please refer to our theoretical analysis in Appendix B and Appendix C for their unsuitability as quantitative standards for QST.

Additionally, we need to pay attention to a fact: Fourier states are proved to be maximal in a formal quantum state texture resource theoretical sense \cite{fp},
\begin{equation}
|f_k\rangle=\frac{1}{\sqrt{d}}\sum_{j=1}^{d}\omega_d^{(k-1)(j-1)}|j\rangle,
\end{equation}
where $\omega_d=e^{2\pi i/d}$ and $k=2,\dots,d$. Therefore, all proper measures should attain their maximum values for the Fourier states $|f_k\rangle$ with $k>1$. This is indeed consistent with condition (ii), since any other quantum state can be obtained from these Fourier states through some free operations \cite{fp}. Furthermore, state rugosity $\mathfrak{R}$ attains its maximum value not only for the Fourier states $|f_k\rangle\,(k>1)$, but also for their linear combinations. This raises an interesting question: whether the proposed measures yield different results for these linear combinations compared to the Fourier states where they achieve maximality. In Appendix D, we show that the proposed measures attain their maximum values for both Fourier states $|f_k\rangle\,(k>1)$ and their linear combinations.

We demonstrate the important role of different QST measures with the following example. Let us consider two single-parameter quantum state families.

\mathbi{Example 2.} Given two state families
$$\sigma_\alpha=\frac{1}{4}\left(
\begin{array}{cccc}
	1 & 0 & 0 & \alpha\\
	0 & 1 & \alpha & 0\\
	0 & \alpha & 1 & 0\\
	\alpha & 0 & 0 & 1\\
\end{array} 
\right ),$$
and
$$\tau_\alpha=\frac{1}{2}\left(
\begin{array}{cccc}
	1 & 0 & 0 & \alpha\\
	0 & 0 & 0 & 0\\
	0 & 0 & 0 & 0\\
	\alpha & 0 & 0 & 1\\
\end{array} 
\right ),$$
with $0\leq\alpha\leq1$.

The calculation shows that $\mathfrak{R}(\sigma_\alpha)=\mathfrak{R}(\tau_\alpha)=-\ln(\frac{1+\alpha}{4})$. In other words, the state rugosity given by \cite{fp} cannot distinguish between these two states. However, for trace distance, we have
\begin{equation*}
	\begin{split}
	&\mathcal{T}_\text{tr}(\sigma_\alpha)=\frac{3-\alpha}{4},\\
	&\mathcal{T}_\text{tr}(\tau_\alpha)=\frac{1}{4}(1-\alpha+\sqrt{\alpha^2+2\alpha+5}).
	\end{split}
\end{equation*}
Clearly, $\mathcal{T}_\text{tr}(\tau_\alpha)>\mathcal{T}_\text{tr}(\sigma_\alpha)$, and the corresponding numerical results are shown in Figure 1, which shows that our trace distance measurement can distinguish all $\sigma_\alpha$ and $\tau_\alpha$ for parameters $\alpha \in[0,1]$. However, we need to point out that even if two measures $M_1$ and $M_2$ satisfy all the necessary conditions, there could also be two quantum states $\sigma$ and $\tau$ such that $M_1(\tau)\geq M_1(\sigma)$ and $M_2(\sigma)\geq M_2(\tau)$. That is to say, comparing two measures is scenario-dependent.
\begin{figure}[htbp]
	\centering
	\includegraphics[width=0.5\textwidth]{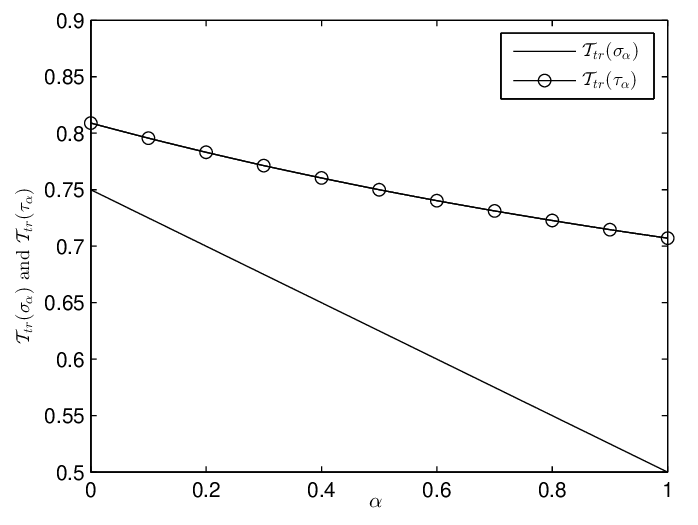}
	\vspace{-1em} \caption{For different values of $\alpha$, the values of  $\mathcal{T}_\text{tr}(\sigma_\alpha)$ and $\mathcal{T}_\text{tr}(\tau_\alpha)$ as shown in figure.} \label{Fig.1}
\end{figure}

\section{III. Conclusions and discussions}
We proposed several definitions of QST measures and theoretically demonstrated that they satisfy the three fundamental conditions that a quantum-state texture measure should fulfil, as outlined in \cite{fp}. Among the measures we define, the trace distance measure $\mathcal{T}_\text{tr}$ offers significant advantages in distinguishing texture states compared to the existing QST measure for some states. In addition, the geometric measure of QST $\mathcal{T}_g$, constructed using the convex roof method, faces the challenge of being difficult to compute for general mixed states. To address this, we provide an analytical lower bound for $\mathcal{T}_g$. Furthermore, the two measures related to fidelity $\mathcal{T}_F$ and $\mathcal{T}_B$ are experimentally friendly and can serve as reliable indicators of a nonequilibrium situation. 

Unlike quantum entanglement and quantum coherence, we demonstrated that the relative entropy $\mathcal{T}_r$ and robustness $\mathcal{T}_R$ are not suitable for quantifying quantum-state texture, even though they satisfy the three basic conditions of QST measures. Additionally, we have demonstrated that while the measure induced by the $l_1$-norm serves as a vital tool for quantifying coherence, it cannot be used to quantify quantum-state texture. We firmly believe that the work presented in this manuscript contributes to the enrichment and advancement of QST measurement and provides a solid theoretical foundation for the further development of QST theory.

\bigskip
{\bf Acknowledgments:} ~This work is supported by the National Natural Science Foundation of China (NSFC) under Grant No. 12171044 and the specific research fund of the Innovation Platform for Academicians of Hainan Province.

\appendix

\section{Appendix A: the $l_1$ measure of QST}
\setcounter{equation}{0}
\renewcommand{\theequation}{A\arabic{equation}}
Analogous to the case of quantum coherence, the $l_1$-norm-induced measure for quantum state texture can be expressed as:
\begin{equation}
	\mathcal{T}_1(\rho)=\|\rho-f_1\|_{l_1},
\end{equation}
where $\|\,.\,\|_{l_1}$ is the $l_1$ matrix norm such that $\|\rho\|_{l_1}=\sum_{i,j}|\rho_{ij}|$. 

We can directly verify condition (i): $\mathcal{T}_1(f_1)=\|f_1-f_1\|_{l_1}=0$ and $\mathcal{T}_1(\rho)\geq0$ for any state $\rho$ based on the definition of $l_1$ norm. For convexity, given $\rho=t\sigma+(1-t)\tau$, we have:
\begin{equation*}
	\begin{split}
		\mathcal{T}_1(t\sigma+(1-t)\tau)&=\|t\sigma+(1-t)\tau-f_1\|_{l_1}\\
		&=\sum_{i,j}|t\sigma_{ij}+(1-t)\tau_{ij}-\frac{1}{d}|\\
		&\leq\sum_{i,j}t|\sigma_{ij}-\frac{1}{d}|+(1-t)|\tau_{ij}-\frac{1}{d}|\\
		&=t\mathcal{T}_1(\sigma)+(1-t)\mathcal{T}_1(\tau),
	\end{split}
\end{equation*}
where $d$ is the dimensional of the Hilbert space. Thus far, we have shown that $\mathcal{T}_1$ satisfies conditions (i) and (iii) for a QST measure. While the satisfaction of condition (ii) appeared certain, our counterexample unfortunately demonstrates that $\mathcal{T}_1$ fails to meet condition (ii).

Consider the case of dimension $d=2$, where there is only one maximum QST state, i.e
\begin{equation}
	|f_2\rangle=\frac{1}{\sqrt{2}}|0\rangle-\frac{1}{\sqrt{2}}|1\rangle.
\end{equation}
The simple calculation shows that $\mathcal{T}_1(|f_2\rangle)=2$. But if one consider the state
\begin{equation}
	|f^*\rangle=\frac{\sqrt{3}}{2}|0\rangle-\frac{1}{2}|1\rangle,
\end{equation}
we have $\mathcal{T}_1(|f^*\rangle)=\frac{3+\sqrt{3}}{2}>2=\mathcal{T}_1(|f_2\rangle)$. The author demonstrated in \cite{fp} that any other quantum state can be reached from the states with maximal QST resources by applying some free operation. In other words, there must exist a free map $\Lambda$ satisfying $\Lambda(|f_2\rangle)=|f^*\rangle$. However, $\mathcal{T}_1(|f^*\rangle)>\mathcal{T}_1(|f_2\rangle)$ implies that the $l_1$ measure can increase under free operation, which violates condition (ii). Consequently, the measure induced by the $l_1$ matrix norm cannot be considered a proper QST measure.
\\

\section{Appendix B: the relative entropy of QST}
\setcounter{equation}{0}
\renewcommand{\theequation}{B\arabic{equation}}
In this section, we explain why relative entropy is not a suitable QST measure.
Similar to entanglement and coherence, we can propose the relative entropy measure of QST:
\begin{equation}\label{e7}
	\mathcal{T}_r(\rho)=S(\rho\|f_1),
\end{equation}
where $S$ is the quantum relative entropy, $S(\rho\|\sigma)=Tr(\rho\log(\rho)-\rho\log(\sigma)).$

It is clear that $\mathcal{T}_r(f_1)=S(f_1\|f_1)=0$ and $\mathcal{T}_r(\rho)=S(\rho\|f_1)\geq0$ for any state $\rho$.

Then we prove convexity. In \cite{mbr}, the authors prove that the relative entropy is jointly convex, that is, if $\rho=\sum_{i}p_i\rho_i$ and $\sigma=\sum_{i}p_i\sigma_i,\,\sum_{i}p_i=1, p_i\geq0$, then 
\begin{equation}\label{e8}
	S(\rho\|\sigma)\leq\sum_{i}p_i S(\rho_i\|\sigma_i).
\end{equation}
Thus, we have
\begin{equation*}
	\begin{split}
		\mathcal{T}_r(\sum_{i}p_i\rho_i)&=S(\sum_{i}p_i\rho_i\|f_1)\\
		&=S(\sum_{i}p_i\rho_i\|\sum_{i}p_i f_1)\\
		&\leq\sum_{i}p_i S(\rho_i\|f_1)\\
		&=\sum_{i}p_i\mathcal{T}_r(\rho_i),
	\end{split}
\end{equation*}
where the inequality is due to Eq. (\ref{e8}).

Before proving the monotonicity under completely positive and trace-preserving maps, let us first prove that 
\begin{equation}\label{e9}
	\mathcal{T}_r(\rho)\geq\sum_{n}q_n\mathcal{T}_r(\rho_n),
\end{equation}
where $\rho_n=K_n\rho K_n^\dagger/q_n, q_n=Tr(K_n\rho K_n^\dagger)$, $\{K_n\}$ are Kraus operators with $\sum_{n}K_n K_n^\dagger=\mathbbm{1}$. We have
\begin{equation*}
	\begin{split}
		\mathcal{T}_r(\rho)&=S(\rho\|f_1)\\
		&\geq\sum_{n}q_n S(\rho_n\|K_n f_1 K_n^\dagger/tr(K_n f_1 K_n^\dagger))\\
		&=\sum_{n}q_n S(\rho_n\|f_1)\\
		&=\sum_{n}q_n \mathcal{T}_r(\rho_n),
	\end{split}
\end{equation*}
where the inequality above is according to the \cite{tbmc} and the second equation is due to $K_n|f_1\rangle\propto|f_1\rangle$ for all $n$ \cite{fp}. Specifically, we can set that $K_n|f_1\rangle=\alpha|f_1\rangle\,(\alpha>0)$, then $K_n f_1 K_n^\dagger=\alpha^2 f_1$. Thus one have $K_n f_1 K_n^\dagger/tr(K_n f_1 K_n^\dagger)=f_1$. Finally, according to Eq. (\ref{e9}) and convexity, we can obtain that
\begin{equation}\label{e10}
	\mathcal{T}_r(\rho)\geq\sum_{n}q_n\mathcal{T}_r(\rho_n)\geq\mathcal{T}_r(\Lambda(\rho)).
\end{equation}

Although the relative entropy defined above meets the three requirements of QST measure, we believe that relative entropy is not a suitable QST measure. In fact, if there is a non-zero state vector that belongs to the intersection of $\text{supp}(\rho)$ and $\text{kernel}(\sigma)$, then infinity will be generated when calculating $\text{Tr}(\rho\log\sigma)$ \cite{mc}. The support set is the row space of the density matrix, and kernel space is the orthogonal complement of the row space. Note that the dimension of row space is equal to the rank of density matrix, so we can obtain a sufficient condition that the relative entropy is infinity: $S(\rho\|\sigma)=\infty$, if $\text{rank}(\rho)>\text{rank}(\sigma)$. $\text{rank}(f_1)=1$, so at least for any quantum state $\rho$ with rank greater than $1$, we have $S(\rho\|f_1)=\infty$. In other words, the relative entropy is not a proper measure to quantify QST in a quantum system.
\\
\section{Appendix C: quantum state texture robustness}
\setcounter{equation}{0}
\renewcommand{\theequation}{C\arabic{equation}}
Following the idea of entanglement robustness \cite{gvrt} and coherence robustness \cite{cntr}, it is natural to define QST robustness as:
\begin{equation}\label{e22}
	\mathcal{T}_R(\rho)=\mathop{\min}_{\sigma}\{s\geq0|\frac{\rho+s\sigma}{1+s}=f_1\},
\end{equation}
where the minimum is taken over all possible quantum states $\sigma$ such that it is convex combination with $\rho$ results in a texture-free state. We can prove that the robustness defined in this way satisfies the three conditions of a QST measure.

The conditions $\mathcal{T}_R(\rho)\geq0$ and $\mathcal{T}_R(f_1)=0$ are obvious.

We take the optimal pseudomixture for $\rho$ to be:
\begin{equation}\label{e23}
	\rho=(1+s)f_1-s\sigma_{opt}.
\end{equation}
For completely positive and trace-preserving maps $\Lambda$ where these free maps must not create QST, that is to say, $\Lambda(f_1)=f_1$. Apply the maps $\Lambda$ to both sides in (\ref{e23}), one have
\begin{equation*}
	\begin{split}
		\Lambda(\rho)&=(1+s)\Lambda(f_1)-s\Lambda(\sigma_{opt})\\
		&=(1+s)f_1-s\Lambda(\sigma_{opt}).
	\end{split}
\end{equation*}
The last equation above means that $\frac{\Lambda(\rho)+s\Lambda(\sigma_{opt})}{1+s}$ is a textureless state. According to the definition of robustness, we have
\begin{equation}
	\mathcal{T}_R(\Lambda(\rho))\leq s=\mathcal{T}_R(\rho).
\end{equation}

Now, we prove convexity. Let $\rho_1$ and $\rho_2$ are two states, and write for each the optimal pseudomixture $\rho_k=(1+s_k)f_1-s_k\sigma_k\,(k=1,2)$. Taking the convex combination $\rho=t\rho_1+(1-t)\rho_2$ with $0\leq t\leq1$, we have:
\begin{footnotesize}
	\begin{equation*}
		\begin{split}
			\rho&=t\rho_1+(1-t)\rho_2\\
			&=t(1+s_1)f_1-ts_1\sigma_1+(1-t)(1+s_2)f_1-(1-t)s_2\sigma_2\\
			&=[1+ts_1+(1-t)s_2]f_1-[ts_1\sigma_1+(1-t)s_2\sigma_2]\\
			&=[1+ts_1+(1-t)s_2]f_1-(ts_1+(1-t)s_2)\frac{[ts_1\sigma_1+(1-t)s_2\sigma_2]}{ts_1+(1-t)s_2}.
		\end{split}
	\end{equation*}
\end{footnotesize}
By definition,
$$\mathcal{T}_R(\rho)=s\leq ts_1+(1-t)s_2=t\mathcal{T}_R(\rho_1)+(1-t)\mathcal{T}_R(\rho_2),$$
which proves convexity. The QST robustness we define satisfies the three conditions. However, when $\rho$ belongs to the orthogonal support of $f_1$, then $\mathcal{T}_R(\rho)$ returns $\infty$. Because there is no finite value of $s$ and quantum state that make the equation $\frac{\rho+s\sigma}{1+s}=f_1$ hold.
\\
\section{Appendix D: proposed measures and Fourier States}
\setcounter{equation}{0}
\renewcommand{\theequation}{D\arabic{equation}}
\subsection{trace distance measure $\mathcal{T}_\text{tr}$}

For the general dimension $d$, $\sum_{j=2}^{n}p_jf_j\,(n\leq d)$ is a linear combination of Fourier states, the matrix of $f_1-\sum_{j=2}^{n}p_jf_j$ is expressed as
\begin{widetext}
$$
\frac{1}{d}\left(\begin{array}{ccccc}
	0 & 1-\sum_{j=2}^{n}p_j\overline{\omega}_d^{j-1} & 1-\sum_{j=2}^{n}p_j\overline{\omega}_d^{2(j-1)} & \cdots & 1-\sum_{j=2}^{n}p_j\overline{\omega}_d^{(d-1)(j-1)} \\
	1-\sum_{j=2}^{n}p_j\omega_d^{j-1} & 0 & 1-\sum_{j=2}^{n}p_j\overline{\omega}_d^{j-1} & \cdots & 1-\sum_{j=2}^{n}p_j\overline{\omega}_d^{(d-2)(j-1)} \\
	1-\sum_{j=2}^{n}p_j\omega_d^{2(j-1)} & 1-\sum_{j=2}^{n}p_j\omega_d^{j-1} & 0 & \cdots & 1-\sum_{j=2}^{n}p_j\overline{\omega}_d^{(d-3)(j-1)} \\
	\vdots & \vdots & \vdots & \ddots & \vdots \\
	1-\sum_{j=2}^{n}p_j\omega_d^{(d-1)(j-1)} & 1-\sum_{j=2}^{n}p_j\omega_d^{(d-2)(j-1)} & 1-\sum_{j=2}^{n}p_j\omega_d^{(d-3)(j-1)} & \cdots & 0
\end{array}\right),
$$
\end{widetext}
where $\omega_d=e^{2\pi i/d}$ and $\overline{\omega}_d$ is the complex conjugate of $\omega_d$. It can be verified that the eigenvalues of $f_1-\sum_{j=2}^{n}p_jf_j$ are $1, -p_2, -p_3,\dots,-p_n, 0,\dots,0$, and the number of zero depends on the value of $n$. Note that $f_1-\sum_{j=2}^{n}p_jf_j$ is Hermitian, so we have
\begin{equation}
\begin{split}
\text{Tr}|f_1-\sum_{j=2}^{n}p_jf_j|&=1+|-p_2|+\dots+|-p_n|=1+\sum_{j=2}^{n}p_j\\
&=2,
\end{split}
\end{equation}
which means that $\mathcal{T}_\text{tr}(\sum_{j=2}^{n}p_jf_j)=1$. $\mathcal{T}_\text{tr}$ reaches the maximum for both the Fourier states and their linear combinations when we realized a fact that for any two states $\rho_1$ and $\rho_2$, $\text{Tr}|\rho_1-\rho_2|\leq2$ holds.

\subsection{geometric measure $\mathcal{T}_g$}
Since the Fourier states  $f_k\,(k>2)$ belong to the orthogonal support of $f_1$, it is easy to obtain that
$$
\mathcal{T}_g(f_k)=1-|\langle f_1|f_k\rangle|^2=1-0=1,\,k>2.
$$
On the one hand, for any linear combination of Fourier states $\sum_{j=2}^{n}p_jf_j\,(n\leq d)$ such that $\sum_{j=2}^{n}p_j=1$, we have
\begin{equation}
\mathcal{T}_g(\sum_{j=2}^{n}p_jf_j)\leq\sum_{j=2}^{n}p_j\mathcal{T}_g(f_j)=\sum_{j=2}^{n}p_j=1,
\end{equation}
where inequality is due to the convexity of $\mathcal{T}_g$. On the other hand, combining Theorem 3 and Eq.\,(D1), one can obtain
\begin{equation}
\mathcal{T}_g(\sum_{j=2}^{n}p_jf_j)\geq\mathcal{D}(\sum_{j=2}^{n}p_jf_j,f_1)^2=1.
\end{equation}
Eq.\,(D2) and (D3) yield $\mathcal{T}_g(\sum_{j=2}^{n}p_jf_j)=1$.
\\
\subsection{fidelity measure $\mathcal{T}_F$ and Bures measure $\mathcal{T}_B$}
In this section, we only need to verify that for any linear combination of Fourier states $\sum_{j=2}^{n}p_jf_j$, the fidelity satisfies the following equation,
\begin{equation}
F(\sum_{j=2}^{n}p_jf_j,f_1)=0.
\end{equation}
$f_1$ is pure, so one have
\begin{equation*}
\begin{split}
F(\sum_{j=2}^{n}p_jf_j,f_1)&=\langle f_1|\sum_{j=2}^{n}p_jf_j|f_1\rangle\\
&=\sum_{j=2}^{n}p_j\langle f_1|f_j\rangle\langle f_j|f_1\rangle\\
&=0,
\end{split}
\end{equation*}
which completes the proof.

\end{document}